\documentclass{article}
   
\newif\ifshorter\shortertrue
 
\usepackage{graphicx}
\usepackage{amsmath,amsfonts,amssymb}
\usepackage{amsthm}
\usepackage{algorithmic,multirow}

\usepackage[ruled,vlined,linesnumbered]{algorithm2e}
\usepackage{tikz}
\usepackage{url}
\usepackage{enumitem}
\usepackage{lhelp}

\def\R {\ensuremath{\mathbb{R}}}

\newtheorem{definition}{Definition}
\newtheorem{theorem}{Theorem}
\newtheorem{corollary}[theorem]{Corollary}
\newtheorem{lemma}[theorem]{Lemma}
\newtheorem{remark}[theorem]{Remark}

\bibliographystyle{plain}

\usepackage{authblk}
\usepackage{rotating}

\begin{document}

\title{Cylindrical Algebraic Decompositions \\ for Boolean Combinations}

\author[*]{Russell Bradford}
\author[*]{James H. Davenport}
\author[*]{Matthew England}
\author[**]{Scott McCallum}
\author[*]{David Wilson}
\affil[*]{University of Bath}
\affil[**]{Macquarie University}

\date{Email: \texttt{ \{R.J.Bradford, J.H.Davenport, M.England, D.J.Wilson\}@bath.ac.uk}, {\tt Scott.McCallum@mq.edu.au} }

\maketitle

\begin{abstract} 
This article makes the key observation that when using cylindrical algebraic decomposition  (CAD) to solve a problem with respect to a set of polynomials, it is not always the signs of those polynomials that are of paramount importance but rather the truth values of certain quantifier free formulae involving them. This motivates our definition of a Truth Table Invariant CAD (TTICAD).
We generalise the theory of equational constraints to design an algorithm which will efficiently construct a TTICAD for a wide class of problems, producing stronger results than when using equational constraints alone. The algorithm is implemented fully in {\sc Maple} and we present promising results from experimentation.
\end{abstract}

\section{Introduction}
\label{sec:Intro}

Cylindrical algebraic decompositions (CADs) are a key tool in real algebraic geometry, both for their original motivation, solving quantifier elimination problems, but also for use in many other applications ranging from robot motion planning \cite[etc.]{SchwartzSharir1983b} to programming with complex functions \cite[etc.]{DBEW12}.  Traditionally CADs are produced sign-invariant to a given set of polynomials, (the signs of the polynomials do not vary on the cells of the decomposition).  However, this gives far more information than required for most problems.  The idea of a truth invariant CAD (the truth of a formula does not vary on each cell) was defined in \cite{Brown98} for use in simplifying CADs.  The key contribution of this paper is an approach to construct CADs which are truth invariant without having to first build a sign-invariant CAD.  Actually, we directly build CADs which are truth table invariant, (the truth values of various quantifier free formulae do not vary).  

We present an algorithm to efficiently produce TTICADs for a wide class of problems, utilising the theory of equational constraints \cite{McCallum1999a}.  The algorithm goes further than equational constraints by allowing the creation of smaller CADs in a wider variety of cases; for example disjunctive normal form where each individual conjunction has an equational constraint but no single explicit equational constraint is present for the formula.  The problem of decomposing complex space according to a set of branch cuts for the purpose of algebraic simplification (\cite[etc.]{Phisanbutetal2010a}) is of this case.  

\subsection{Background on CAD}

We briefly remind the reader about the theory of CAD, first proposed by Collins in \cite{Collins1975}.
\begin{definition}
A {\em Tarski formula\/} $F(x_1,\ldots,x_n)$ is a Bool\-ean combination ($\land,\lor,\neg$) of statements about the signs, ($=0,>0,<0$, but therefore $\ne0,\ge0,\le0$ as well), of certain integral polynomials $f_i(x_1,\ldots,x_n)$.  We use {\em QFF} to denote a quantifier free Tarski formula.
\end{definition}
CAD was developed as a tool for the problem of quantifier elimination over the reals: given a quantified Tarski formula
\begin{equation}\label{eq:QT}
Q_{k+1}x_{k+1}\ldots Q_nx_n F(x_1,\ldots,x_n)
\end{equation}
(where $Q_i\in\{\forall,\exists\}$ and $F$ is a QFF), produce an equivalent QFF $\psi(x_1,\ldots,x_k)$. Collins proposed to decompose $\R^n$ cylindrically such that each cell was sign-invariant for all $f_i$ occurring in $F$. Then $\psi$ would be the disjunction of the defining formulae of those cells $c_i$ in $\R^k$ such that (\ref{eq:QT}) was true over the whole of $c_i$, which is the same as saying that (\ref{eq:QT}) is true at any one ``sample point'' of $c_i$.
\par
Collins' algorithm has two phases.  The first, \textit{projection}, applies a projection operator repeatedly to a set of polynomials, each time producing another set in one fewer variables.  Together these sets contain the {\em projection polynomials}.
These are then used in the second phase, \textit{lifting}, to build the CAD incrementally.  First $\R$ is decomposed into cells which are points and intervals corresponding to the real roots of the univariate polynomials.  Then $\R^2$ is decomposed by repeating the process over each cell using the bivariate polynomials at a sample point.  The output for each cell consists of {\em sections} (where a polynomial vanishes) and {\em sectors} (the regions between). Together these form a {\em stack} over the cell, and taking the union of these stacks gives the CAD of $\R^2$.  This is repeated until a CAD of $\R^n$ is produced.  
\par
To conclude that a CAD produced in this way is sign-invariant we need delineability.  A polynomial is {\em delineable} in a cell if the portion of its zero set in the cell consists of disjoint sections.  A set of polynomials are {\em delineable} in a cell if each is delineable and the sections of different polynomials in the cell are either identical or disjoint.  The projection operator used must be defined so that over each cell of a sign-invariant CAD for projection polynomials in $r$ variables, the polynomials in $r+1$ variables are delineable.
\par
The output of a CAD algorithm depends on the variable ordering.  We usually work with polynomials in $\mathbb{Z}[x_1,\ldots,x_n]$ with the variables, ${\bf x}$, in ascending order (so we first project with respect to $x_n$ and continue to reach  univariate polynomials in $x_1$).  The \textit{main variable} of a polynomial (${\rm mvar}$) is the greatest variable present with respect to the ordering. 
\par
Major directions of work since 1975 includes the following:
\begin{enumerate}[itemsep=-5pt,topsep=-8pt]
\item Improvements in Collins' main algorithms by \cite[and many others]{McCallum1988}. These have focussed on reducing the projection sets required as 
discussed further later. 
\item Complexity theory of CAD \cite{BrownDavenport2007,DavenportHeintz1988}.
\item Partial CAD, introduced in \cite{CollinsHong1991}, where the structure of $F$ is used to lift 
only when required to deduce $\psi$.
\item The theory of equational constraints, 
\cite{McCallum1999a,McCallum2001,BrownMcCallum2005} discussed in Section \ref{subsec:EC}.  This is related to the previous direction but differs by using more efficient projections.  
\item CAD via Triangular Decomposition \cite{Chenetal2009d}: a radically different approach for computing a sign-invariant CAD which is used for \textsc{Maple}'s inbuilt CAD command.
\end{enumerate}

\subsection{TTICAD}
\label{sec:Problem}

We define a new type of CAD, the topic of this paper.
\begin{definition}
Let $\Phi = \{ \phi_i\}_{i=1}^t$ be a list of QFFs.
We say a cylindrical algebraic decomposition $\mathcal{D}$ is a {\em Truth Table Invariant} CAD for $\Phi$ (TTICAD) if the Boolean value of each $\phi_i$ is constant (either true or false) on each cell of $\mathcal{D}$.
\end{definition}

A full sign-invariant CAD for the set of polynomials occurring in the formulae of $\Phi$ would clearly be a TTICAD.  However, we aim to produce an algorithm that will construct smaller TTICADs for certain $\Phi$. We will achieve this using the theory of equational constraints (first suggested in \cite{Collins1998} with the key theory developed in \cite{McCallum1999a}).
\begin{definition}
Suppose some quantified formula is given:
\begin{equation*}
\phi^* = (Q_{k+1} x_{k+1})\cdots(Q_n x_n) \phi({\bf x}).
\end{equation*}
where the $Q_i$ are quantifiers and $\phi$ is quantifier free.
An equation $f=0$ is called an {\bf equational constraint} of $\phi^*$ if $f=0$ is logically implied by $\phi$ (the quantifier-free part of $\phi^*$). 
Such a constraint may be either explicit or implicit.
\end{definition}
We suppose that we are given a formula list $\Phi$ in which every QFF $\phi_i$ has a designated explicit equational constraint $f_i = 0$.
We will construct TTICADs by generalising McCallum's reduced projection operator for equational constraints (as in \cite{McCallum1999a}) so that we may make use of the equational constraints.

\subsection{Worked Example}
\label{workedexample}

We will provide details for the following worked example.

\noindent Consider the polynomials:
\begin{align*}
f_1 := x^2+y^2-1 \qquad \qquad \qquad & g_1 := xy - \tfrac{1}{4} \\
f_2 := (x-4)^2+(y-1)^2-1  \quad & g_2 := (x-4)(y-1) - \tfrac{1}{4}
\end{align*}
which are plotted in Figure \ref{fig:workedexample1}.
We wish to solve the following problem: find the regions of $\R{}^2$ where the formula
\begin{equation*}
\Phi:= \left(f_1 = 0 \land g_1 < 0 \right)\lor \left( f_2 = 0 \land g_2 < 0   \right)
\end{equation*}
is true.  Assume that we are using the variable ordering $y \succ x$ (so the 1-dimensional CAD is with respect to $x$).

Both \textsc{Qepcad} \cite{Brown03} and \textsc{Maple} 16 \cite{Chenetal2009d} produce a full sign-invariant CAD for the polynomials with 317 cells.  At first glance it seems that the theory of equational constraints \cite{McCallum1999a,McCallum2001,BrownMcCallum2005} is not applicable here as neither $f_1 = 0$ nor $f_2 = 0$ is logically implied by $\Phi$.  However, while there is no explicit equational constraint we can observe that $f_1f_2 = 0$ is an {\em implicit} constraint of $\Phi$.  Using \textsc{Qepcad} with this declared gives a CAD with 249 cells.  Later, in Section \ref{subsec:workedexample2} we demonstrate how a TTICAD with 105 cells can be produced.

\begin{figure}
\caption{The polynomials from Section~\ref{workedexample}.}\label{fig:workedexample1}
\begin{center}
\includegraphics[scale=0.2]{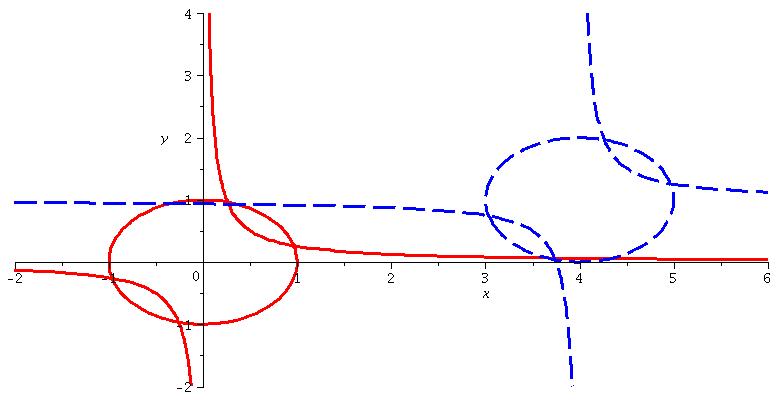}
\end{center}
\vskip-20pt
\end{figure}

\section{Projection Operators}
\label{sec:Project}

\subsection{Equational Constraints}
\label{subsec:EC}

We use two key theorems from McCallum's work on projection and equational constraints.  Both theorems use CADs which are not just sign-invariant but have the stronger property of order-invariance.  A CAD is {\em order-invariant} with respect to a set of polynomials if each polynomial has constant order of vanishing within each cell.

Let $P$ be the McCallum projection operator \cite{McCallum1988}, which produces coefficients, discriminant and cross resultants from a set of polynomials.  We assume the usual trivial simplifications such as removal of constants, exclusion of entries identical to a previous entry (up to constant multiple), and using only the necessary coefficients. 
Recall that a set $A \subset \mathbb{Z}[{\bf x}]$ is an {\em irreducible basis}
if the elements of $A$ are of positive degree in the main variable, irreducible and pairwise relatively prime. 
The main theorem underlying $P$ follows. 

\begin{theorem}[\cite{McCallum1998}]\label{DJW:theorem:SMcCTheorem1}
Let $A$ be an irreducible basis in $\mathbb{Z}[{\bf x}]$ and let $S$ be a connected submanifold of $\mathbb{R}^{n-1}$. Suppose each element of $P(A)$  is order-invariant in $S$.
\noindent Then each element of $A$ either vanishes identically on $S$ or is analytic delineable on $S$, (a slight variant on traditional delineability, see \cite{McCallum1998}). The sections of $A$ not identically vanishing are pairwise disjoint, and each element of $A$ not identically vanishing is order-invariant in such sections.  
\end{theorem}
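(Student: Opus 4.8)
The plan is to fix a single $f \in A$ and to separate the two alternatives by examining its coefficients with respect to the main variable $x_n$: these all belong to $P(A)$ (among the ``necessary coefficients''), hence are order-invariant, and in particular sign-invariant, on $S$. If every such coefficient vanishes on $S$ then $f$ vanishes identically on the cylinder $S\times\R$, which is the first alternative. Otherwise let $f_l$ be the coefficient of highest $x_n$-degree not vanishing on $S$; by sign-invariance $f_l$ is nowhere zero on $S$, while the coefficients above it (also in $P(A)$) vanish throughout $S$, so $f(s,x_n)$ has constant degree $l\ge 1$ in $x_n$ as $s$ ranges over $S$. This degree-invariance together with the non-vanishing ``leading'' coefficient $f_l$ is the platform for everything that follows.

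For delineability and disjointness I would reason locally near an arbitrary $s_0\in S$ and then globalise by connectedness. Since $f$ is irreducible in $\Z[\mathbf{x}]$ and of positive degree in $x_n$, it is primitive in $x_n$ and, by Gauss's lemma, irreducible --- hence squarefree --- over $\Q(x_1,\dots,x_{n-1})$, so $\operatorname{disc}_{x_n}(f)$ is a nonzero polynomial; lying in $P(A)$ it is order-invariant, thus sign-invariant, on $S$. In the generic situation, where this discriminant is nowhere zero on $S$, the complex roots of $f(s,\cdot)$ --- and so the real ones --- depend analytically on $s$ and never collide or escape to infinity (implicit function theorem, using $f_l\ne 0$); hence $f$ is analytic delineable near $s_0$ with a locally, and therefore globally, constant number of sections, and distinct sections of $f$ are disjoint. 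For two distinct $f,g\in A$, pairwise coprimality makes $\operatorname{res}_{x_n}(f,g)$ a nonzero polynomial in $P(A)$, hence sign-invariant on $S$: where it is nonzero the sections of $f$ and $g$ cannot meet, and where it vanishes (necessarily on all of $S$) a continuity argument along the connected $S$ forces the shared-root sections to coincide. This yields that $A$ is delineable on $S$ with sections that are pairwise identical or disjoint.

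The substantive part --- and the place where order-invariance of the discriminant, not merely its sign, is used --- is the claim that $f$ is order-invariant on each of its sections, including the degenerate case in which $\operatorname{disc}_{x_n}(f)$ vanishes on all of $S$. Here I would fix a section $\sigma$ of $f$ and a point $p=(s_0,\theta(s_0))$ on it and factor $f$, in the ring of analytic function germs at $s_0$, into irreducible analytic factors; the factor $\hat f$ cutting out $\sigma$ near $p$ occurs with some multiplicity $e$, so $f=\hat{f}^{e}\cdot(\text{unit})$ near $p$ and $\operatorname{ord}_p f=e$, the smooth hypersurface $\hat f=0$ having order $1$. The job is to show that $e$ is the same at every point of $\sigma$. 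One relates the order of vanishing of $\operatorname{disc}_{x_n}(f)$ at $s_0$ to the ramification-and-collision pattern of the root branches of $f$ over $s_0$ (their Puiseux exponents and multiplicities); order-invariance of the discriminant on $S$, together with the delineability already established --- used to transport the analytic factorisation continuously along $S$ --- then pins this pattern, and in particular $e$, down to be constant along $\sigma$.

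I expect this last step to be the main obstacle: making precise the dictionary between the order of vanishing of the discriminant and the multiplicities of the analytic root branches of $f$, and --- harder --- establishing it uniformly along the whole of $S$ rather than just at one point, so that the multiplicity $e$ of $f$ along a section genuinely cannot jump. This uniform local-analytic-factorisation argument is exactly where the strengthening from Collins's sign-invariance to order-invariance (and from ordinary to analytic delineability) earns its keep; the degree and resultant bookkeeping of the earlier steps is routine by comparison.
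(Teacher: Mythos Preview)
The paper does not prove this theorem. It is quoted from \cite{McCallum1998} (as the citation in the theorem header signals) and used throughout as a black box; the paper's own technical contributions are Theorem~\ref{DJW:theorem:SMcC3corrected} and the algorithm built on top of it. There is therefore nothing in this paper against which to compare your proposal.

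For what it is worth, your outline follows the broad shape of McCallum's argument in the cited reference: degree-invariance from the order-invariance of the coefficients in $P(A)$, analytic root functions via the implicit function theorem when the discriminant is nonvanishing, pairwise disjointness of sections from the cross-resultants, and the delicate order-invariance-on-sections conclusion via a local analytic study tying the order of the discriminant to root multiplicities. You have also correctly located where the real difficulty lies. One point to watch: your delineability argument as written covers only the ``generic'' case where $\operatorname{disc}_{x_n}(f)$ is nowhere zero on $S$. When the discriminant vanishes identically on $S$, \emph{analytic delineability itself} (not just the subsequent order-invariance on sections) already requires the finer order-invariance hypothesis and the local-analytic machinery you defer to your final paragraph; that case needs to be part of the main argument, not an afterthought.
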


The main mathematical result underlying the reduction of $P$ in the presence of an equational constraint $f$ is as follows.

\begin{theorem}[\cite{McCallum1999a}]\label{DJW:theorem:SMcCTheorem2}
Let $f({\bf x}), g({\bf x})$ be integral polynomials with positive degree in $x_n$, 
let $r(x_1,\ldots,x_{n-1})$ be their resultant, and suppose $r \neq 0$.
Let $S$ be a connected subset of $\mathbb{R}^{n-1}$
such that $f$ is delineable on $S$ and $r$ is order-invariant in $S$. 
Then $g$ is {\em sign-invariant} in every section of $f$ over $S$.
\end{theorem}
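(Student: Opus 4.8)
The plan is to argue one section at a time, so fix a section $\sigma$ of $f$ over $S$. Since $f$ is delineable on $S$, the section $\sigma$ is the graph of an analytic function $\theta\colon S\to\R$ with $f(\alpha,\theta(\alpha))=0$ throughout $S$; set $\tilde g(\alpha):=g(\alpha,\theta(\alpha))$, so that it suffices to show that $\tilde g$ is sign-invariant on $S$. Suppose it is not. Since $\tilde g$ is continuous and $S$ is connected, $\tilde g$ must then vanish somewhere (by the intermediate value theorem, if it takes both signs) and also be nonzero somewhere, so $Z:=\{\alpha\in S:\tilde g(\alpha)=0\}$ is a nonempty, proper, closed subset of $S$; connectedness then forces $Z$ not to be open, and I would pick a point $\alpha_0$ in its boundary relative to $S$. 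There $\tilde g(\alpha_0)=0$ while $\tilde g$ does not vanish identically on any neighbourhood of $\alpha_0$ in $S$, so $k:=\mathrm{ord}_{\alpha_0}(\tilde g)$ satisfies $1\le k<\infty$, and there are points arbitrarily close to $\alpha_0$ at which $\tilde g\neq 0$.

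The key step will be a local factorisation of the resultant. Near $\alpha_0$, dividing $f$ by the monic polynomial $x_n-\theta$ gives $f=(x_n-\theta)f_1$, where $f_1$ is a polynomial in $x_n$ whose coefficients are analytic in the remaining variables, and multiplicativity of the resultant in its first argument yields
\begin{equation*}
r \;=\; \mathrm{res}_{x_n}(x_n-\theta,\,g)\cdot\mathrm{res}_{x_n}(f_1,\,g) \;=\; \tilde g\cdot\rho, \qquad \rho:=\mathrm{res}_{x_n}(f_1,\,g),
\end{equation*}
where $\rho$ is again analytic near $\alpha_0$ and, crucially, does not involve the branch $\theta$. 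Since $r\neq 0$, it is not identically zero near $\alpha_0$, hence neither is $\rho$, and $\mathrm{ord}_{\alpha_0}(\rho)<\infty$.

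Now I would bring in the hypothesis on $r$: being order-invariant on $S$, it has a finite constant order $m$ there, so additivity of order under products gives $m=\mathrm{ord}_{\alpha_0}(r)=k+\mathrm{ord}_{\alpha_0}(\rho)$, whence $\mathrm{ord}_{\alpha_0}(\rho)=m-k<m$. By upper semicontinuity of the order of an analytic function --- the locus where the order is at least $m$ is closed and does not contain $\alpha_0$ --- one gets $\mathrm{ord}_{\alpha'}(\rho)<m$ for all $\alpha'$ near $\alpha_0$. Choosing such an $\alpha'$ with $\tilde g(\alpha')\neq 0$ (possible since $\alpha_0$ lies in the closure of $S\setminus Z$) forces $\mathrm{ord}_{\alpha'}(\tilde g)=0$, and hence $m=\mathrm{ord}_{\alpha'}(r)=\mathrm{ord}_{\alpha'}(\rho)<m$ --- a contradiction. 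So $\tilde g$ is sign-invariant on $S$, i.e.\ $g$ is sign-invariant on $\sigma$; as $\sigma$ was an arbitrary section, the theorem follows.

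The order computation of the last paragraph is the conceptual core, but it is short; the harder part, I expect, is making the local factorisation rigorous with the right regularity. The delicate case is when $\theta(\alpha_0)$ is a multiple root of $f(\alpha_0,x_n)$, i.e.\ when $\mathrm{disc}_{x_n}(f)$ vanishes at $\alpha_0$: then the implicit function theorem does not extend $\theta$ to a neighbourhood of $\alpha_0$ in $\R^{n-1}$, and one must instead use delineability of $f$ on $S$ to rule out any splitting of roots along $S$, so that the multiplicity of $\theta$ as a root of $f$ is locally constant along $S$, the factorisation $f=(x_n-\theta)f_1$ still has analytic $f_1$ over $S$, and --- after passing to a squarefree associate of $f$ having $\sigma$ as a section, if needed --- the order arithmetic transfers correctly to $S$. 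Pinning this down, and fixing the exact sense in which ``$r$ is order-invariant on $S$'' is being used, is where the genuine care must go.
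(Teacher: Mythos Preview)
This theorem is not proved in the present paper: it is quoted from \cite{McCallum1999a} and accompanied only by the graphical illustration in Figure~\ref{fig:theorem2} and the surrounding discussion. There is therefore no proof here against which to compare your attempt.

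For what it is worth, your outline is essentially McCallum's own argument in \cite{McCallum1999a}: factor the resultant locally as $r=\tilde g\cdot\rho$ via the section function $\theta$, then combine additivity of order with the order-invariance hypothesis on $r$ and upper semicontinuity of $\mathrm{ord}(\rho)$ to reach a contradiction. One point worth tightening is that the hypotheses give only delineability of $f$ on a connected \emph{subset} $S$, so $\theta$ is a priori merely continuous on $S$; the analyticity you rely on comes from extending $\theta$, via the implicit function theorem, to an ambient open neighbourhood of $\alpha_0$ in $\R^{n-1}$ when $\theta(\alpha_0)$ is a simple root of $f(\alpha_0,\cdot)$, and your order arithmetic for $\tilde g$ and $\rho$ must really be carried out in that ambient neighbourhood, invoking order-invariance of $r$ only at the nearby points that actually lie in $S$. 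You correctly identify that the multiple-root case is where the genuine work lies; McCallum handles it by a more careful local analysis than your final paragraph supplies, so as written your argument is a sound sketch of the simple-root case together with an honest acknowledgement of the remaining gap.
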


Figure \ref{fig:theorem2} gives a graphical representation of the question answered by Theorem \ref{DJW:theorem:SMcCTheorem2}.  Here we consider polynomials $f(x,y,z)$ and $g(x,y,z)$ of positive degree in $z$ whose resultant $r$ 
is non-zero, and a connected subset $S \subset \mathbb{R}^2$ in which $r$ is order-invariant.  We further suppose that $f$ is delineable on $S$ (noting that Theorem 1 with $n=3$ and $A = \{f\}$ provides sufficient conditions for this).  We ask whether $g$ is sign-invariant in the sections of $f$ over $S$.  Theorem \ref{DJW:theorem:SMcCTheorem2} answers this question affirmatively:  the real variety of $g$ either aligns with a given section of $f$ exactly (as for the bottom section of $f$ in Figure \ref{fig:theorem2}), or has no intersection with such a section (as for the top). 
The situation at the middle section of $f$ cannot happen.    
Theorem \ref{DJW:theorem:SMcCTheorem2} thus suggests a reduction of the projection operator $P$ relative to an equational constraint $f = 0$ for the first projection step, as in \cite{McCallum1999a}.

\begin{figure}
\caption{Graphical representation of Theorem \ref{DJW:theorem:SMcCTheorem2}}\label{fig:theorem2}
\begin{center}
\includegraphics[scale=0.45]{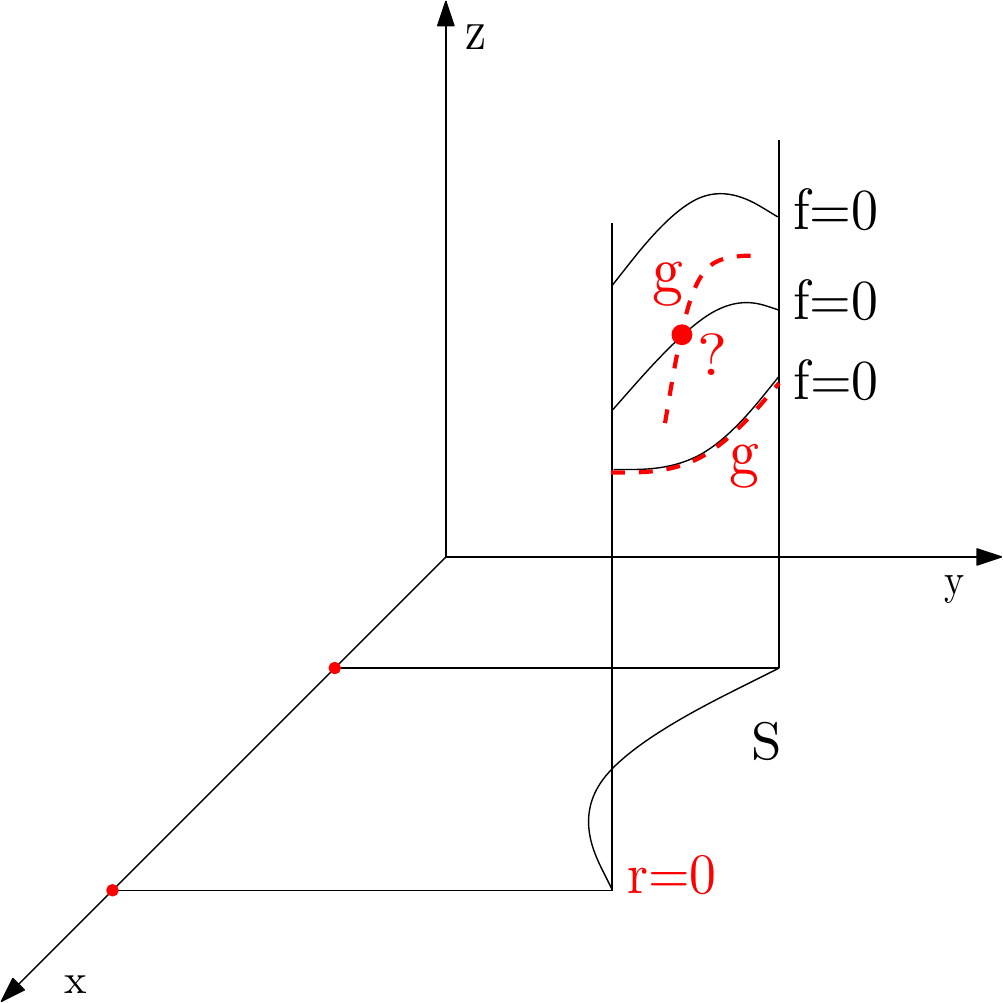}
\end{center}
\vskip-10pt
\end{figure}

\subsection{A Projection Operator for TTICAD}
\label{subsec:ProjOper}

In \cite{McCallum1999a} the central concept is that of the reduced projection
of a set $A$ of integral polynomials relative to a nonempty subset $E$ of $A$ and it is an extension of this which is central here.
For simplicity in \cite{McCallum1999a}, the concept is first defined for the case when $A$ is an irreducible basis and by analogy we start with a similar special case.
Let $\mathcal{A} = \{ A_i\}_{i=1}^t$ be a list of irreducible bases $A_i$
and let $\mathcal{E} = \{ E_i \}_{i=1}^t$ be a list of nonempty subsets
$E_i \subseteq A_i$. 
Put $A = \bigcup_{i=1}^t A_i$ and $E = \bigcup_{i=1}^t E_i$  (we will use the convention of uppercase Roman letters for sets and calligraphic letters for sequences). 
\begin{definition}\label{def:reducedproj}
We define the {\em reduced projection of $\mathcal{A}$ with respect to $\mathcal{E}$}, denoted by $P_{\mathcal{E}}(\mathcal{A})$, as follows:
\begin{equation}
P_{\mathcal{E}}(\mathcal{A}) := \textstyle{\bigcup_{i=1}^t} P_{E_i}(A_i) \cup {\rm Res}^{\times} (\mathcal{E}) \label{eqn:reducedproj}
\end{equation}
where 
\begin{align*}
P_{E_i}(A_i) &= P(E_i) \cup \left\{ {\rm res}_{x_n}(f,g) 
\mid f \in E_i, g \in A_i, g \notin E_i \right\}
\\
{\rm Res}^{\times} (\mathcal{E}) 
&= \{ {\rm res}_{x_n}(f,\hat{f}) \mid \exists i,j :
\, 
f \in E_i, \hat{f} \in E_j, i<j, f \neq \hat{f}  \}
\end{align*}
\end{definition}
In Section \ref{subsec:alg} we build Algorithm \ref{algorithm:TTICADalgorithm} to apply the reduced projection operator for less special input sets by considering contents and irreducible factors of positive degree.

\begin{definition}\label{def:excproj}
The {\em excluded projection polynomials} of $(A_i,E_i)$ 
are those in $P(A)$ but excluded from $P_{\mathcal{E}}(\mathcal{A})$:
\begin{align}
&{\rm ExclP}_{E_i}(A_i) := P(A_i) \setminus P_{E_i}(A_i) \label{eqn:exclprojeqn}\\
& \,\, = \{ {\rm coeffs}(g), {\rm disc}_{x_n}(g), {\rm res}_{x_n}(g,\hat{g}) \mid g, \hat{g} \in A_i \setminus E_i,  g \neq \hat{g} \}.\nonumber
\end{align}
The total set of excluded polynomials, denoted ${\rm ExclP}_{\mathcal{E}}(\mathcal{A})$, consists of all the ${\rm ExclP}_{E_i}(A_i)$, along with the cross resultants of $g_i$ with all of $A_j$ for $i \neq j$. 
\end{definition}

\noindent The following theorem is an analogue of Theorem 2.3 of \cite{McCallum1999a},
and provides the foundation for our algorithm in Section \ref{subsec:alg}.

\begin{theorem}\label{DJW:theorem:SMcC3corrected}
  Let $S$ be a connected submanifold of $\mathbb{R}^{n-1}$. Suppose each element of $P_{\mathcal{E}}(\mathcal{A})$ is order invariant in $S$. Then each $f \in E$ either vanishes identically on $S$ or is analytically delineable on $S$, the sections over $S$ of the $f \in E$ which do not vanish identically are pairwise disjoint, and each element $f \in E$ which does not vanish identically is order-invariant in such sections.
  
{\em Moreover}, for each $i$, with $1 \leq i \leq t$, every $g \in A_i \setminus E_i$ is sign-invariant in each section over $S$ of every $f \in E_i$ which does not vanish identically. 
\end{theorem}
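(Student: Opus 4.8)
The plan is to deduce the statement from the two cited results of McCallum, with the union $E=\bigcup_{i=1}^t E_i$ (entries identical up to a constant multiple removed) playing, in Theorem~\ref{DJW:theorem:SMcCTheorem1}, the role of a single irreducible basis. Two preliminary checks are needed. First, $E$ is genuinely an irreducible basis: each member has positive degree in $x_n$ and is irreducible because it belongs to some irreducible basis $A_i$, and two distinct members are relatively prime since distinct irreducible polynomials are either associates --- one of which has been discarded --- or coprime. Second, $P(E)\subseteq P_{\mathcal{E}}(\mathcal{A})$: for $f\in E$, choosing any $i$ with $f\in E_i$, the entries ${\rm coeffs}(f)$ and ${\rm disc}_{x_n}(f)$ lie in $P(E_i)\subseteq P_{E_i}(A_i)$; and for distinct $f,\hat f\in E$, the resultant ${\rm res}_{x_n}(f,\hat f)$ lies in $P(E_i)$ if some single $E_i$ contains both, and otherwise in ${\rm Res}^{\times}(\mathcal{E})$ by its definition. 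Since every element of $P_{\mathcal{E}}(\mathcal{A})$ is order-invariant in $S$, so is every element of $P(E)$.

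With these checks in hand, I would apply Theorem~\ref{DJW:theorem:SMcCTheorem1} to the irreducible basis $E$ on the connected submanifold $S$. This at once delivers the first three assertions: every $f\in E$ either vanishes identically on $S$ or is analytically delineable on $S$; the sections over $S$ of those $f\in E$ that do not vanish identically are pairwise disjoint; and each such $f$ is order-invariant in its sections. The members of $E$ nullified on $S$ contribute no sections, so they need no further attention.

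For the ``Moreover'' clause I would fix $i$ with $1\le i\le t$, a polynomial $f\in E_i$ that does not vanish identically on $S$, and a polynomial $g\in A_i\setminus E_i$, and invoke Theorem~\ref{DJW:theorem:SMcCTheorem2}. Both $f$ and $g$ have positive degree in $x_n$, and being distinct members of the irreducible basis $A_i$ they are coprime, so the resultant $r:={\rm res}_{x_n}(f,g)$ is nonzero. By the definition of $P_{E_i}(A_i)$ --- the clause contributing ${\rm res}_{x_n}(f,g)$ for $f\in E_i$, $g\in A_i$, $g\notin E_i$ --- we have $r\in P_{E_i}(A_i)\subseteq P_{\mathcal{E}}(\mathcal{A})$, so $r$ is order-invariant in $S$; and by the previous paragraph $f$ is delineable on $S$ in the analytic sense supplied by Theorem~\ref{DJW:theorem:SMcCTheorem1}, which is the sense in which Theorem~\ref{DJW:theorem:SMcCTheorem2} is to be read. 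Theorem~\ref{DJW:theorem:SMcCTheorem2} then yields that $g$ is sign-invariant in every section of $f$ over $S$, and since $i$, $f$ and $g$ were arbitrary subject to the stated conditions this establishes the clause.

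The derivation is essentially bookkeeping, the real work having been front-loaded into the design of $P_{\mathcal{E}}(\mathcal{A})$ --- in particular into including ${\rm Res}^{\times}(\mathcal{E})$, precisely the cross-block resultants needed for $E$ to behave like one irreducible basis. The point demanding most care is the treatment of polynomials nullified on $S$: for such an $f\in E_i$ both the delineability conclusion and the sign-invariance of $g$ in the sections of $f$ are vacuous, which is why the statement is hedged with ``does not vanish identically'', and one should verify that nothing in the reduction silently assumes non-nullification (the hypotheses of Theorem~\ref{DJW:theorem:SMcCTheorem2} constrain only degrees in $x_n$ and the resultant, so this is safe). This mirrors, component by component, the original argument for Theorem~2.3 of \cite{McCallum1999a}.
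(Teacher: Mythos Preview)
Your proof is correct and follows essentially the same route as the paper's own proof: establish $P(E)\subseteq P_{\mathcal{E}}(\mathcal{A})$, apply Theorem~\ref{DJW:theorem:SMcCTheorem1} to $E$ for the first three conclusions, then use Theorem~\ref{DJW:theorem:SMcCTheorem2} with ${\rm res}_{x_n}(f,g)\in P_{E_i}(A_i)$ for the final clause. Your version is in fact more careful than the paper's, since you explicitly verify that $E$ is an irreducible basis and that the resultant $r$ is nonzero, points the paper leaves implicit.
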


\begin{proof}
The crucial observation is that 
$
P(E) \subseteq P_{\mathcal{E}}(\mathcal{A}).
$ 
To see this, recall equation \eqref{eqn:reducedproj} and note that we can write
\begin{equation*}
  P(E) = {\textstyle\bigcup_i }P(E_i) \cup {\rm Res}^{\times}(\mathcal{E}).
\end{equation*}

We can therefore apply Theorem \ref{DJW:theorem:SMcCTheorem1} to the set $E$ and obtain the first three conclusions immediately.

There remains the final conclusion to prove. Let $i$ be in the range $1 \leq i \leq t$, let $g \in A_i \setminus E_i$ and let $f \in E_i$; suppose $f$ does not vanish identically on $S$. Now ${\rm res}_{x_n}(f,g) \in P_{\mathcal{E}}(\mathcal{A})$, and so is order-invariant in S by hypothesis. Further, we already concluded that $f$ is delineable. Therefore by Theorem \ref{DJW:theorem:SMcCTheorem2}, $g$ is sign-invariant in each section of $f$ over $S$. 
\end{proof}

In the following section we can use Theorem \ref{DJW:theorem:SMcC3corrected} as the key tool for our implementation of TTICAD, so long as the equational constraint $f$ does not vanish identically on the lower dimensional manifold, $S$.  When working with a polynomial $f$ considered in $r$ variables that vanishes identically at a point  $\alpha \in \R{}^{r-1}$ we say that $f$ is {\em nullified} at $\alpha$.

\begin{remark}
It is clear that the reduced projection $P_{\mathcal{E}}(\mathcal{A})$ will lead to fewer (or the same) projection polynomials than the full projection $P$.  
One may consider instead using the reduced projection $P_E(A)$ of \cite{McCallum1999a}, (with $E=\cup_i E_i$ and $A=\cup_i A_i$ as above).  In the context of Section \ref{sec:Problem} this corresponds to using $\prod_i f_i$ as an implicit equational constraint for a single formula.  Note that $P_{\mathcal{E}}(\mathcal{A})$ also contains fewer polynomials than $P_E(A)$ in general since $P_E(A)$ contains all resultants res$(f,g)$ where $f \in E_i, g \in A_j$ (and $g \notin E$), while $P_{\mathcal{E}}(\mathcal{A})$ contains only those with $i=j$ (and $g \notin E_i$).
\end{remark}

\subsection{Worked Example}
\label{subsec:workedexample2}

In Section \ref{sec:Implementation} we will discuss how to use these results to define an algorithm for TTICAD. First we illustrate the potential savings with our worked example from Section \ref{workedexample}.

In the notation introduced above we have:
\begin{align*}
A_1 := \{f_1,g_1\}, \,\, E_1:=\{ f_1 \}; 
\,\,
A_2 := \{f_2,g_2\}, \,\, E_2:=\{ f_2 \}.
\end{align*}
We construct the reduced projection sets for each $\phi_i$,
\begin{align*}
  P_{E_1}(A_1) &= \left\{ x^2-1, x^4 - x^2 + \tfrac{1}{16} \right\}, \\
  P_{E_2}(A_2) &= \left\{ x^2 - 8x +15, x^4 -16x^3 + 95x^2-248x + \tfrac{3841}{16} \right\}
\end{align*}
and the cross-resultant set
\begin{equation*}
{\rm Res}^{\times} (\mathcal{E}) = \{{\rm res}_{y}(f_1,f_2)\} = \{ 68x^2 -272x + 285\}.
\end{equation*}

\begin{figure}
\caption{The polynomials from the worked example along with the solutions to the projection sets.}
\label{fig:workedexample4}
\begin{center}
\includegraphics[scale=0.22]{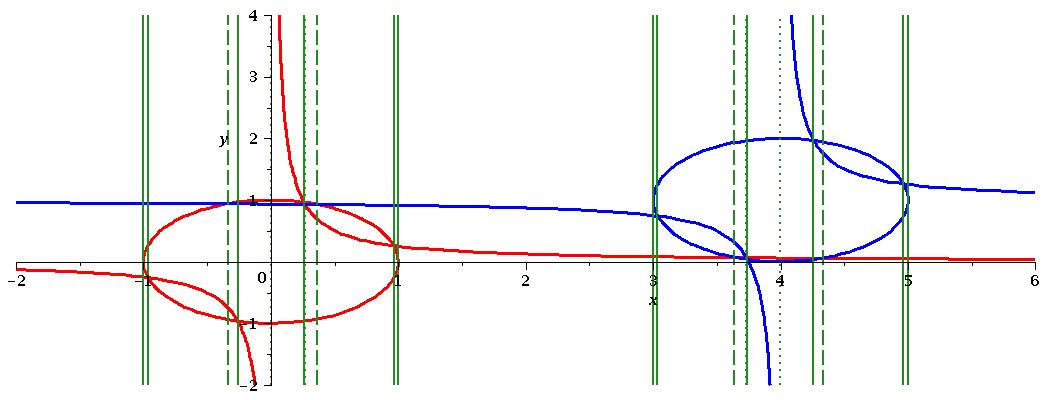}
\end{center}
\vskip-10pt
\end{figure}

$P_{\mathcal{E}}(\mathcal{A})$ is then the union of these three sets.  In Figure \ref{fig:workedexample4} we plot the polynomials (solid curves) and identify the 12 real solutions of $P_{\mathcal{E}}(\mathcal{A})$ (solid vertical lines).  We can see the solutions align with the asymptotes of the $f$s and the important intersections (those of $f_1$ with $g_1$ and $f_2$ with $g_2$).   

If we were to instead use a projection operator based on an implicit equational constraint $f_1f_2=0$ then in the notation above we would construct $P_E(A)$ from $A=\{f_1,f_2,g_1,g_2\}$ and $E=\{f_1,f_2\}$.  This set provides an extra 4 solutions (the dashed vertical lines) which align with the intersections of $f_1$ with $g_2$ and $f_2$ with $g_1$.  
Finally, if we were to consider $P(A)$ then we gain another 4 solutions (the dotted vertical lines) which align with the intersections of $g_1$ and $g_2$ and the asymptotes of the $g$s.  
In Figure \ref{fig:workedexample5} we magnify a region
to show explicitly that the point of intersection between $f_1$ and $g_1$ is identified in $P_{\mathcal{E}}(\mathcal{A})$, whereas the intersection points of $g_2$ with both $f_1$ and $g_1$ are ignored.

\begin{figure}
\caption{Magnified region of Figure \ref{fig:workedexample4}}\label{fig:workedexample5}
\begin{center}
\includegraphics[scale=0.25]{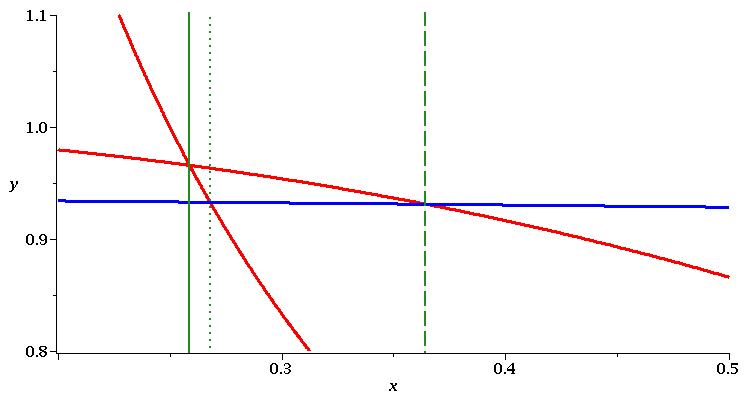}
\end{center}
\vskip-20pt
\end{figure}

Hence the 1-dimensional CAD produced using $P_{\mathcal{E}}(\mathcal{A})$ has 25 cells compared to 33 when using $P_E(A)$ and 41 when using $P(A)$.  However, it is important to note that this reduction is amplified after lifting (using Theorem \ref{DJW:theorem:SMcC3corrected} and and Algorithm \ref{algorithm:TTICADalgorithm}).  The full dimensional TTICAD has 105 cells, the CAD invariant with respect to the implicit equational constraint has 249 cells  and the full sign-invariant CAD has 317.

\section{Implementation}
\label{sec:Implementation}

\subsection{Algorithm Description and Proof} 
\label{subsec:alg}

We describe carefully Algorithm \ref{algorithm:TTICADalgorithm}.  This will create a TTICAD of $\R{}^n$ for a list of QFFs, $\Phi = \{ \phi_i \}_{i=1}^t$, in variables ${\bf x} = x_1 \prec x_2 \prec \cdots \prec x_n$ where each $\phi_i$ has a designated equational constraint $f_i = 0$ of positive degree.  
We use a subalgorithm \texttt{CADW}, fully specified and validated in \cite{McCallum1998}.  The input of {\tt CADW} is: $r$, a positive integer and $A$, a set of $r$-variate integral polynomials. The output is a Boolean $w$ which if true is accompanied by an order-invariant CAD for $A$ (a list of indices $I$ and sample points $S$). 

Let $A_i$ be the set of all polynomials occurring in $\phi_i$,
put $E_i = \{f_i\}$, and let $\mathcal{A}$ and $\mathcal{E}$
be the lists of the $A_i$ and $E_i$, respectively.
Our algorithm effectively defines the reduced projection of $\mathcal{A}$ 
with respect to $\mathcal{E}$ using the special case
of this definition from the previous section. The definition amounts to using 
$ \mathfrak{P} := C \cup P_{\mathcal{F}}(\mathcal{B})$ for $P_{\mathcal{E}}(\mathcal{A})$,
where $C$ is the set of contents of all the elements of all the $A_i$,
$\mathcal{B}$ is the list 
$\{B_i\}_{i=1}^t$, such that $B_i$ is the finest squarefree
basis for the set ${\rm prim}(A_i)$
of primitive parts of elements of $A_i$ which have 
positive degree, and $\mathcal{F}$ is the list
$\{F_i\}_{i=1}^t$, such that $F_i$ is the finest squarefree basis
for ${\rm prim}(E_i)$. (The reader will notice that this notation and the 
definition of $P_{\mathcal{E}}(\mathcal{A})$ is analogous to the work in Section 5 of \cite{McCallum1999a}.)

\begin{algorithm}[h!]\label{algorithm:TTICADalgorithm}

\SetKwInOut{Input}{Input}\SetKwInOut{Output}{Output}
\Input{A list of quantifier-free formulae $\Phi = \{ \phi_i \}_{i=1}^t$ in variables $x_1,\ldots,x_n$. Each $\phi_i$ has a designated equational constraint $f_i = 0$.
}
\Output{Either 
$\bullet$ $\mathcal{D}:$ A TTICAD of $\R{}^n$ for $\Phi$ (described by lists $I$ and $S$
   of cell indices and sample points, respectively); or \qquad \qquad  {\ } 
$\bullet$~{\bf FAIL}: If $\Phi$ is not well oriented
(Def. \ref{def:tticadwellorientedness}).
}
\BlankLine
\For{$i = 1 \dots t$}{
Set $E_i \leftarrow \{f_i\}$.
Compute the finest squarefree basis $F_i$ for ${\rm prim}(E_i)$\;
} 
Set $F \leftarrow \cup_{i=1}^t F_i$\;
\eIf{$n=1$}{
Isolate in $(I,S)$ the real roots of the product of the polynomials in $F$\label{alg:step:base1}\;
\Return $I$ and $S$ for $\mathcal{D}$\label{alg:step:base2}\;
}
{
\For{$i = 1 \dots t$}{
Extract the set $A_i$ of polynomials in $\phi_i$ \; 
Compute the set $C_i$ of contents of the elements of $A_i$; 
Compute the set $B_i$, the finest squarefree basis for ${\rm prim}(A_i)$\;
} 
Set $C \leftarrow \cup_{i=1}^t C_i$,
$\mathcal{B} \leftarrow (B_i)_{i=1}^t$ and 
$\mathcal{F} \leftarrow (F_i)_{i=1}^t$ \;
Construct the projection set: $\mathfrak{P} \leftarrow  C \cup P_{\mathcal{F}}(\mathcal{B})$ \;

Attempt to construct a lower-dimensional CAD: $w',I',S' \leftarrow {\tt CADW}(n-1,\mathfrak{P})$\label{alg:step:cadw}\;
\If{$w' = false$}{
\Return {\bf FAIL} ($\mathfrak{P}$ not well oriented)\label{alg:step:fail}\;
}
\label{TTICADn-1dimstep}

$I \leftarrow\emptyset$; $S \leftarrow\emptyset$\label{step:lifting1}\;
\For{each cell $c \in \mathcal{D}'$}{
$L_c \leftarrow \{\}$\;
\For{$i = 1,\ldots t$}{
\eIf{$f_i$ is nullified on $c$}{
\eIf{$\dim(c)>0$}{
\Return {\bf FAIL} \label{notwellor2}  ($\Phi$ not well oriented)\;}{
$L_c \leftarrow L_c \cup B_i$\label{addthebi}\;}
}{
$L_c \leftarrow L_c \cup F_i$\;
}
}
Lift over $c$ using $L_c$: construct cell indices and sample points for the stack over $c$ of the polynomials in $L_c$, adding them to $I$ and $S$\label{step:lifting2}\;
}

\Return $I$ and $S$ for $\mathcal{D}$\;
\caption{{TTICAD Algorithm}}
}
\end{algorithm}

We shall prove that, provided $\mathcal{A}$ and $\mathcal{E}$ are well-oriented as in Definition \ref{def:tticadwellorientedness}, the output of  Algorithm \ref{algorithm:TTICADalgorithm} is indeed a TTICAD for $\Phi$.
Note that this condition is specialised and new, introduced for this paper.  Its requirement is due to both the use of \texttt{CADW} from \cite{McCallum1998} and the introduction of our new reduced projection operator.

We first recall the more general notion of well-orientedness from \cite{McCallum1998}.  A set $A$ of
$n$-variate polynomials is said to be {\em well oriented} if whenever $n > 1$,
every $f \in {\rm prim}(A)$ is nullified by at most a finite number of points
in $\R^{n-1}$, and (recursively) $P(A)$ is well-oriented.  The Boolean output of \texttt{CADW} is false if the input set was not well-oriented in this sense. 
Now we define our new notion of well-orientedness for the set lists $\mathcal{A}$ and $\mathcal{E}$ defined above, and hence $\Phi$.
\begin{definition}\label{def:tticadwellorientedness}
We say $\mathcal{A}$ is {\em well oriented with respect to} $\mathcal{E}$ (and that $\Phi$ is {\em well oriented}) if whenever $n>1$, every constraint polynomial $f_i$ is nullified by at most a finite number of points in $\R^{n-1}$, and $P_{\mathcal{E}}(\mathcal{A})$ 
(hence $\mathfrak{P}$ in the algorithm) is well-oriented in the sense of \cite{McCallum1998}.
\end{definition}
\begin{theorem}
The output of Algorithm \ref{algorithm:TTICADalgorithm} is as specified.  
\end{theorem}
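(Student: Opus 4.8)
The plan is to induct on $n$, mirroring the two branches of Algorithm~\ref{algorithm:TTICADalgorithm} and drawing on Theorem~\ref{DJW:theorem:SMcC3corrected} for the lifting phase together with the validated specification of \texttt{CADW} from \cite{McCallum1998}. For the base case $n=1$, the output $\mathcal{D}$ is the partition of $\R$ into the real roots of $\prod_{p\in F}p$ and the open intervals they determine; since $F$ consists of the squarefree parts of the primitive parts of the $f_i$, these roots are exactly the real zeros of $f_1,\dots,f_t$. On an open interval every $f_i$ is nonzero, so, as $f_i=0$ is logically implied by $\phi_i$, every $\phi_i$ is false there; on a $0$-cell every polynomial has constant sign, so every $\phi_i$ is truth-invariant. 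Hence $\mathcal{D}$ is a TTICAD and no failure occurs, as required.

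For the inductive step ($n>1$) I would first record that $\mathfrak{P}=C\cup P_{\mathcal{F}}(\mathcal{B})$ is precisely the realization of the reduced projection $P_{\mathcal{E}}(\mathcal{A})$ outside the irreducible-basis case, obtained by the content-and-finest-squarefree-basis reduction exactly as in Section~5 of \cite{McCallum1999a}; in particular the elements of $F$ take over the role of $E$, those of $B_i$ the role of $A_i$, and $f_i$ is nullified on a cell $c$ precisely when $\mathrm{cont}(f_i)$ vanishes on $c$ or some $F_i$-factor is nullified on $c$. If $\texttt{CADW}(n-1,\mathfrak{P})$ returns $w'=\mathrm{false}$, then $\mathfrak{P}$, hence $P_{\mathcal{E}}(\mathcal{A})$, fails to be well oriented in the sense of \cite{McCallum1998}, so by Definition~\ref{def:tticadwellorientedness} $\Phi$ is not well oriented and returning \textbf{FAIL} at line~\ref{alg:step:fail} is correct. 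Otherwise $\mathcal{D}'$ is an order-invariant CAD of $\R^{n-1}$ for $\mathfrak{P}$, so every element of $P_{\mathcal{E}}(\mathcal{A})$ and every content in $C$ is order-invariant on each cell $c$ of $\mathcal{D}'$, which is exactly the hypothesis needed to invoke Theorem~\ref{DJW:theorem:SMcC3corrected} over $c$.

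For the lifting phase, fix a cell $c$ of $\mathcal{D}'$. If some $f_i$ is nullified on $c$ with $\dim c>0$, then $f_i$ is nullified at infinitely many points of $\R^{n-1}$, so $\Phi$ is not well oriented and the \textbf{FAIL} at line~\ref{notwellor2} is correct; hence assume the stack over $c$ is actually built. When $\dim c>0$, no $f_i$ is nullified and $L_c=F$, and Theorem~\ref{DJW:theorem:SMcC3corrected} gives that every $F_i$-factor is analytically delineable on $c$, that the sections over $c$ of the elements of $F$ are pairwise disjoint, and that every polynomial of $\phi_i$ is sign-invariant in each section over $c$ of each $F_i$-factor (which, since $\mathrm{cont}(f_i)$ is nonzero on $c$, are exactly the sections of $f_i$). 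When $\dim c=0$, $c$ is a point, $B_i$ has been added to $L_c$ for each nullified $f_i$, and on the fibre over $c$ every polynomial in $L_c$ and, since $C\subseteq\mathfrak{P}$, every content of an element of $A_i$, has constant sign on each fibre cell. I would then check, cell by cell of the stack: on a section lying in a section of $f_i$ we have $f_i=0$ with the remaining polynomials of $\phi_i$ of constant sign, so $\phi_i$ is truth-invariant; on a sector no element of $F$ (hence no $f_i$ with $\mathrm{cont}(f_i)$ nonzero on $c$) vanishes, so every such $\phi_i$ is false; and for any $i$ with $f_i$ nullified on $c$ (so $\dim c=0$), all polynomials of $\phi_i$ have constant sign on each fibre cell, so $\phi_i$ is truth-invariant. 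Thus every $\phi_i$ is truth-invariant on every cell of the stack over $c$; and, delineability over each $c$ being automatic for $0$-cells and supplied by Theorem~\ref{DJW:theorem:SMcC3corrected} otherwise, the union of the stacks is a genuine CAD $\mathcal{D}$ of $\R^n$, which by the above is a TTICAD for $\Phi$.

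I expect the main obstacle to be the opening move of the inductive step: justifying rigorously that the operator the algorithm actually computes, $\mathfrak{P}=C\cup P_{\mathcal{F}}(\mathcal{B})$, inherits the conclusions of Theorem~\ref{DJW:theorem:SMcC3corrected}, which is proved only for lists of irreducible bases, i.e.\ carrying through the content/finest-squarefree-basis bookkeeping, including the characterisation of nullification in that representation. A secondary delicate point is the lifting over $0$-cells on which a constraint is nullified: one must confirm that $B_i$ together with the contents already present in $\mathfrak{P}$ forces sign-invariance of all polynomials of $\phi_i$ on the fibre, and that the two \textbf{FAIL} exits together are exactly the negation of Definition~\ref{def:tticadwellorientedness}.
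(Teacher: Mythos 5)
Your overall route is the paper's: the $n=1$ base case, the call to \texttt{CADW} on $\mathfrak{P}=C\cup P_{\mathcal{F}}(\mathcal{B})$ with the two \textbf{FAIL} exits tied to Definition~\ref{def:tticadwellorientedness}, and lifting over each cell $c$ of $\mathcal{D}'$ by invoking Theorem~\ref{DJW:theorem:SMcC3corrected} with $\mathcal{A}=\mathcal{B}$ and $\mathcal{E}=\mathcal{F}$. The ``main obstacle'' you flag is not really one: since $\mathcal{B}$ and $\mathcal{F}$ are irreducible bases, Theorem~\ref{DJW:theorem:SMcC3corrected} applies to them verbatim, and the passage back to the input polynomials is exactly the bookkeeping you sketch --- each $g\in A_i$ is ${\rm cont}(g)$, which lies in $C\subseteq\mathfrak{P}$ and hence is sign-invariant on the whole cylinder $c\times\R$, times powers of elements of $B_i$. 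This is precisely how the paper argues.

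The genuine gap is in your case analysis over a positive-dimensional cell $c$. You treat (a) stack sections lying in a section of $f_i$ and (b) sectors, but for a fixed $i$ a section $\sigma$ of the stack may be a section of an element of $F_j$ with $j\neq i$ and of no element of $F_i$; the ``Moreover'' clause of Theorem~\ref{DJW:theorem:SMcC3corrected} says nothing about the polynomials of $\phi_i$ on such a $\sigma$, so as written your argument does not determine the truth value of $\phi_i$ there (in the worked example: $\phi_1$ on the sections of $f_2$). The paper closes this by first showing $f_i$ itself is sign-invariant on \emph{every} section of the stack: ${\rm cont}(f_i)$ is sign-invariant on the cylinder, and each element of $F_i$ is order-invariant in the sections of all non-identically-vanishing elements of $F$ by the first conclusion of Theorem~\ref{DJW:theorem:SMcC3corrected} (not only in sections of $F_i$-elements). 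If $f_i\neq 0$ on $\sigma$ then $\phi_i$ is identically false there; only when $f_i=0$ on $\sigma$ --- which forces $\sigma$ to be a section of an element of $F_i$ --- does one invoke the ``Moreover'' clause, together with the content factorisation, to get sign-invariance of every $g\in A_i\setminus E_i$. The ingredients are all present in your write-up, but this intermediate step (sign-invariance of $f_i$ on all sections, and the resulting split into the $f_i\neq0$ and $f_i=0$ cases) is needed and missing. A minor additional caution: your biconditional characterisation of nullification of $f_i$ in terms of ${\rm cont}(f_i)$ and the $F_i$-factors is only needed, and only safe, in the direction ``$f_i$ not nullified on $c$ implies no $F_i$-factor is nullified on $c$'', which is the direction the proof uses to apply Theorem~\ref{DJW:theorem:SMcC3corrected}.
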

\begin{proof}
We must show that when $\Phi$ is well-oriented the output is a Truth Table Invariant CAD, (each $\phi_i$ has constant truth value in each cell of $\mathcal{D}$), and \textbf{FAIL} otherwise.

If the input was univariate then it is trivially well-oriented.  The algorithm will construct a CAD $\mathcal{D}$ of $\R^1$ using the roots of the irreducible factors of the constraint polynomials (steps \ref{alg:step:base1} to \ref{alg:step:base2}).
At each 0-cell all the polynomials in each $\phi_i$ trivially have constant signs, and hence every $\phi_i$ has constant truth value.  In each 1-cell no constraint polynomial has a root, so every $\phi_i$ has constant truth value $false$.

Now suppose $n > 1$.  
If $\mathfrak{P}$ is not well-oriented in the sense of \cite{McCallum1998} then \texttt{CADW} returns $w'$ as false.  In this case the input $\Phi$ is not well oriented in the sense of Definition \ref{def:tticadwellorientedness} and Algorithm \ref{algorithm:TTICADalgorithm} correctly returns \textbf{FAIL}.  Otherwise, $\mathfrak{P}$ is well-oriented and at step \ref{alg:step:cadw} we have $w'= true$.  Further, $I'$ and $S'$ specify a CAD, $\mathcal{D}'$, order-invariant with respect to $\mathfrak{P}$.
Let $c$, a submanifold of $\R{}^{n-1}$, be a cell of $\mathcal{D}'$.

Suppose first that the dimension of $c$ is positive.
If any constraint polynomial $f_i$ vanishes identically on $c$ then $\Phi$ is not well oriented in the sense of Definition \ref{def:tticadwellorientedness} and the algorithm correctly returns \textbf{FAIL} at step \ref{notwellor2}.  
Otherwise, we know that $\Phi$ is certainly well-oriented.  Since no constraint polynomial $f_i$ vanishes then no element of the basis $F$ vanishes identically on $c$ either.  Hence, by Theorem \ref{DJW:theorem:SMcC3corrected}, applied with $\mathcal{A} = \mathcal{B}$ and $\mathcal{E} = \mathcal{F}$, each element of $F$ is delineable on $c$, and the sections over $c$ of the elements of $F$ are pairwise disjoint.
Thus the sections and sectors over $c$ of the elements of $F$ comprise
a stack $\Sigma$ over $c$.  Furthermore, Theorem \ref{DJW:theorem:SMcC3corrected} assures us that, for each $i$, every element of $B_i \setminus F_i$ is sign-invariant in each section over $c$ of every element of $F_i$.

Let $1 \le i \le t$. Consider first a section $\sigma$ of the stack $\Sigma$.
We shall show that $\phi_i$ has constant truth value in $\sigma$.
Now the constraint polynomial $f_i$ is a product of its content ${\rm cont}(f_i)$ 
and some elements of the basis $F_i$.
But ${\rm cont}(f_i)$, an element of $\mathfrak{P}$,
is sign-invariant 
in the whole cylinder $c \times \R$ 
which includes $\sigma$. Moreover all of the elements of $F_i$ are sign-invariant in $\sigma$, as noted previously. Therefore $f_i$ is sign-invariant in $\sigma$.
If $f_i$ is positive or negative in $\sigma$ then
$\phi_i$ has constant truth value $false$ in $\sigma$.

Suppose that $f_i = 0$ throughout $\sigma$.
It follows that $\sigma$ must be a section of some element of the basis $F_i$.
Let $g \in A_i \setminus E_i$ be a non-constraint polynomial in $A_i$.
Now, by the definition of $B_i$, we see $g$ can be written as 
$g = {\rm cont}(g) h_1^{p_1} \cdots h_k^{p_k}$ where $h_j \in B_i, p_j \in \mathbb{N}$. 
But ${\rm cont}(g)$, in $\mathfrak{P}$,
is sign-invariant 
in the whole cylinder $c \times \R$ 
including $\sigma$.
Moreover each $h_j$ is sign-invariant in $\sigma$, as noted previously.
Hence $g$ is sign-invariant in $\sigma$.  (Note that in the case where $g$ does not have main variable $x_n$ then $g = {\rm cont}(g)$ and the conclusion still holds).
Since $g$ was an arbitrary element of $A_i \setminus E_i$,
it follows that all polynomials in $A_i$ are sign-invariant in $\sigma$,
and hence that $\phi_i$ has constant truth value in $\sigma$.

Next consider a sector $\sigma$ of the stack $\Sigma$, and notice that
at least one such sector exists.
As observed above, ${\rm cont}(f_i)$ is sign-invariant in $c$,
and $f_i$ does not vanish identically on $c$.
Hence ${\rm cont}(f_i)$ is non-zero throughout $c$. Moreover
each element of the basis $F_i$ is delineable on $c$.
Hence the constraint polynomial $f_i$ is nullified by no point of $c$.
It follows from this that the algorithm does not return \textbf{FAIL}
during the lifting phase.
It follows also that $f_i \neq 0$ throughout $\sigma$.
Therefore $\phi_i$ has constant truth value $false$ in $\sigma$.

It remains to consider the case in which the dimension of $c$ is 0.
In this case the roots of the polynomials in the lifting set $L_c$ constructed
by the algorithm determine a stack $\Sigma$ over $c$.
Each $\phi_i$ trivially has constant truth
value in each section (0-cell) of this stack,
and the same can routinely be shown for each sector (1-cell) of this stack.
\end{proof}

\begin{remark}
When the input to Algorithm \ref{algorithm:TTICADalgorithm} is a single QFF then it produces a CAD which is invariant with respect to the sole equational constraint.  This may be shown using the results of \cite{McCallum1999a} alone.  However, we note that  Algorithm \ref{algorithm:TTICADalgorithm} is actually more efficient in the lifting stage than the modified QEPCAD algorithm discussed in \cite{McCallum1999a} since the lifting set excludes some non-equational constraint input polynomials.
\end{remark} 

Algorithm \ref{algorithm:TTICADalgorithm} and Definition \ref{def:tticadwellorientedness} have been kept conceptually simple to aid readability. However in practice the algorithm may sometimes be unnecessarily cautious.  
In \cite{Brown2005}, several cases where non-well oriented input can still lead to an order-invariant CAD are discussed. Similarly here, we can sometimes allow the nullification of an equational constraint on a positive dimensional cell.

\begin{lemma}\label{lemma:constpolys}
Let $f_i$ be an equational constraint which vanishes identically on a cell $c \in \mathcal{D}'$ constructed during Algorithm \ref{algorithm:TTICADalgorithm}. If all polynomials in ${\rm ExclP}_{E_i}(A_i)$ are constant on $c$ then any $g \in A_i \setminus E_i$ will be delineable over $c$. 
\end{lemma}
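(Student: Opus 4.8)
The plan is to show that when every polynomial in ${\rm ExclP}_{E_i}(A_i)$ is constant on $c$, we recover exactly enough order-invariance in the ambient $(n-1)$-dimensional space around $c$ to apply Theorem~\ref{DJW:theorem:SMcCTheorem2} to the pair $(f_i,g)$ for each non-constraint $g$. First I would recall that $\mathcal{D}'$ is order-invariant with respect to $\mathfrak{P} = C \cup P_{\mathcal{F}}(\mathcal{B})$, and that $\mathfrak{P}$ contains $P_{E_i}(A_i)$ (modulo contents and squarefree bases); what it does \emph{not} necessarily contain is the excluded set ${\rm ExclP}_{E_i}(A_i)$, which by Definition~\ref{def:excproj} comprises ${\rm coeffs}(g)$, ${\rm disc}_{x_n}(g)$ and ${\rm res}_{x_n}(g,\hat g)$ for $g,\hat g \in A_i\setminus E_i$. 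A polynomial that is \emph{constant} on a cell is trivially order-invariant on that cell (order $0$ if the constant is non-zero; and the relevant entries here cannot be identically zero since they survive the trivial simplifications). So the hypothesis exactly supplies the missing order-invariance: on $c$, every element of $P(A_i)$ — equivalently, of $P(\mathrm{prim}(B_i))$ together with the contents — is order-invariant, because it is either already in $\mathfrak{P}$ (hence order-invariant by construction of $\mathcal{D}'$) or lies in ${\rm ExclP}_{E_i}(A_i)$ (hence constant, hence order-invariant by hypothesis).

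Next I would invoke Theorem~\ref{DJW:theorem:SMcCTheorem1} applied to the irreducible basis $B_i$ over the submanifold $c$: since every element of $P(B_i)$ is order-invariant on $c$, each element of $B_i$ either vanishes identically on $c$ or is analytic delineable on $c$. Now take any $g \in A_i\setminus E_i$ and write $g = {\rm cont}(g)\, h_1^{p_1}\cdots h_k^{p_k}$ with $h_j\in B_i$. The content is in $C\subseteq\mathfrak{P}$, so it is sign-invariant (indeed order-invariant) on $c$ and non-vanishing there unless $g$ itself vanishes identically on $c$; each factor $h_j$ is, by the previous sentence, either identically vanishing on $c$ or delineable on $c$. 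Delineability of a product of polynomials that are individually delineable with pairwise identical-or-disjoint sections follows from the fact that $B_i$ is delineable \emph{as a set} on $c$ (the ``sections of different polynomials in the cell are either identical or disjoint'' clause of Theorem~\ref{DJW:theorem:SMcCTheorem1}), so the zero set of $g$ restricted to $c\times\R$ decomposes into disjoint sections. Hence $g$ is delineable over $c$, which is the desired conclusion.

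The main obstacle — really the only delicate point — is handling factors $h_j$ that are themselves nullified on $c$, and the interaction with the content: I need to be careful that ``$g$ delineable over $c$'' is interpreted in the sense appropriate to the lifting step (so that a stack over $c$ is still well-defined), and that a product of some identically-vanishing factors with some genuinely delineable factors still yields a well-defined stack — here the identically-vanishing factors contribute no sections, the delineable factors contribute disjoint sections, and the content contributes none, so the union is fine. I would also remark that this is the ingredient needed to relax step~\ref{notwellor2} of Algorithm~\ref{algorithm:TTICADalgorithm}: if $f_i$ is nullified on a positive-dimensional $c$ but all of ${\rm ExclP}_{E_i}(A_i)$ is constant there, one may lift over $c$ using $B_i$ (as is already done for $0$-cells at step~\ref{addthebi}) rather than returning \textbf{FAIL}, and truth-table invariance of $\phi_i$ over the resulting stack follows by the same sign-invariance argument as in the main proof. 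I do not expect to need any computation beyond unwinding these definitions.
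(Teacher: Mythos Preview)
Your argument is essentially the paper's: observe that constancy of ${\rm ExclP}_{E_i}(A_i)$ on $c$, together with the order-invariance of $\mathfrak{P}$ already provided by $\mathcal{D}'$, makes every element of $P(A_i)$ (equivalently $C_i\cup P(B_i)$) order-invariant on $c$, and then invoke Theorem~\ref{DJW:theorem:SMcCTheorem1} to obtain delineability of the elements of $A_i$ (via $B_i$) over $c$. One small slip worth fixing: your opening sentence proposes applying Theorem~\ref{DJW:theorem:SMcCTheorem2} to the pair $(f_i,g)$, but that theorem requires $f_i$ to be delineable on $c$, which fails precisely because $f_i$ vanishes identically there --- your actual execution correctly abandons this and uses Theorem~\ref{DJW:theorem:SMcCTheorem1} instead, exactly as the paper does.
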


\begin{proof}
Suppose first that $A_i$ and $E_i$ satisfy the simplifying conditions from Section \ref{subsec:ProjOper}. Rearranging \eqref{eqn:exclprojeqn} we see
  \begin{equation*}
    P(A_i) = P_{E_i}(A_i) \cup {\rm ExclP}_{E_i}(A_i).
  \end{equation*}

However, given the conditions of the lemma, this is equivalent (after the removal of constants which do not affect CAD construction) to $P_{E_i}(A_i)$ on $c$.  So here $P(A_i)$ is a subset of $P_{\mathcal{E}}(\mathcal{A})$ and we can conclude by Theorem \ref{DJW:theorem:SMcCTheorem1} that all elements of $A_i$ vanish identically on $c$ or are delineable over $c$. 

In the more general case we can still draw the same conclusion because $P(A_i) = C_i \cup P_{F_i}(B_i) \cup {\rm ExclP}_{F_i}(B_i) \subseteq \mathfrak{P}$.
\end{proof}

Hence we can use Lemma \ref{lemma:constpolys} to safely extend step \ref{addthebi} to also apply in such cases. In particular, we can allow equational constraints $f_i$ which do not have main variable $x_n$ in such cases. We have included this in our implementation discussed in Section \ref{MapleSec}.
In  theory, we may be able to go further and allow step \ref{addthebi} to apply in cases where the polynomials in ${\rm ExclP}_{E_i}(A_i)$ are not necessarily all constant, but have no real roots within the cell $c$. However, identifying such cases would require answering a separate quantifier elimination question, which may not be trivial. 

\subsection{TTICAD via the ResCAD Set}

In Algorithm \ref{algorithm:TTICADalgorithm} the lifting stage (steps \ref{step:lifting1} to \ref{step:lifting2}) varies according to whether an equational constraint is nullified.  When this does not occur there is an alternative implementation of TTICAD which would be simpler to introduce into existing CAD algorithms.  
Define the {\em ResCAD Set} of $\Phi$ as
\begin{equation*}
  \mathcal{R}(\Phi) = E \cup {\textstyle \bigcup_{i=1}^t} \left\{ {\rm res}_{x_n}(f,g) \mid f\in E_i, g \in A_i, g \notin E_i \right\}.
\end{equation*}

\begin{theorem} \label{THEOREM:PROJEQUALSRES}
   Let $\mathcal{A} = ( A_i)_{i=1}^t$ be a list of irreducible bases $A_i$
and let $\mathcal{E} = ( E_i )_{i=1}^t$ be a list of non-empty subsets
$E_i \subseteq A_i$.   
  For the McCallum projection operator $P$, \cite{McCallum1988} we have:
  \begin{equation*}
    {P}(\mathcal{R}(\Phi)) = {P}_{\mathcal{E}}(\mathcal{A}).
  \end{equation*}
\end{theorem}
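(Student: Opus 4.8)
The plan is to show the two sets $P(\mathcal{R}(\Phi))$ and $P_{\mathcal{E}}(\mathcal{A})$ are equal by unwinding both definitions and matching terms. Recall that $\mathcal{R}(\Phi) = E \cup \bigcup_i \{\,{\rm res}_{x_n}(f,g) \mid f \in E_i,\, g \in A_i \setminus E_i\,\}$, and that $P$ applied to a set produces its coefficients, discriminants (of elements with positive degree in $x_n$) and cross-resultants. On the other side, by Definition~\ref{def:reducedproj}, $P_{\mathcal{E}}(\mathcal{A}) = \bigcup_i P(E_i) \cup \bigcup_i \{{\rm res}_{x_n}(f,g) \mid f\in E_i, g \in A_i\setminus E_i\} \cup {\rm Res}^{\times}(\mathcal{E})$. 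So I would split $P(\mathcal{R}(\Phi))$ according to which elements of $\mathcal{R}(\Phi)$ contribute.

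First I would handle the ``easy'' inclusions. The resultant elements ${\rm res}_{x_n}(f,g)$ for $f\in E_i,\,g\in A_i\setminus E_i$ lie in $\mathcal{R}(\Phi)$ and are in one fewer variable, hence contribute themselves to $P(\mathcal{R}(\Phi))$ (as trivial ``coefficient'' entries, being already free of $x_n$) and match the middle block of $P_{\mathcal{E}}(\mathcal{A})$ exactly. The elements of $E = \bigcup_i E_i$ lie in $\mathcal{R}(\Phi)$, and $P$ applied to $E$ alone produces: coefficients and discriminants of each $f \in E_i$ (these make up $\bigcup_i P(E_i)$, at least the non-resultant part), together with all cross-resultants ${\rm res}_{x_n}(f,\hat f)$ for distinct $f,\hat f \in E$. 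These cross-resultants are precisely $\bigcup_i \{{\rm res}_{x_n}(f,\hat f): f,\hat f \in E_i\} \cup {\rm Res}^{\times}(\mathcal{E})$: the within-basis resultants complete $\bigcup_i P(E_i)$, and the across-basis resultants are exactly ${\rm Res}^{\times}(\mathcal{E})$. So the contributions from $E$ and from the resultant elements of $\mathcal{R}(\Phi)$ together give all of $P_{\mathcal{E}}(\mathcal{A})$, proving $P_{\mathcal{E}}(\mathcal{A}) \subseteq P(\mathcal{R}(\Phi))$.

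The harder direction is $P(\mathcal{R}(\Phi)) \subseteq P_{\mathcal{E}}(\mathcal{A})$: I must check that $P$ does not generate anything \emph{extra} from $\mathcal{R}(\Phi)$ beyond what is listed. The genuinely new terms to rule out are cross-resultants ${\rm res}_{x_n}(p,q)$ where at least one of $p,q$ is a resultant element $r = {\rm res}_{x_n}(f,g)$ of $\mathcal{R}(\Phi)$. But such $r$ is free of $x_n$, so it has degree $0$ in the main variable and contributes no discriminant and no resultant under $P$ — it only contributes itself (as a content/coefficient). This is where I expect to lean on the stated ``usual trivial simplifications'' for $P$ (removal of constants, using only necessary coefficients, ignoring entries of degree $0$ in the main variable for discriminant/resultant purposes). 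Likewise, since each $A_i$ is an irreducible basis, the elements of $E_i$ are already primitive and irreducible, so $P(E_i)$ introduces no content terms and no spurious factors. The main obstacle, then, is bookkeeping: carefully confirming that every term $P$ produces from $\mathcal{R}(\Phi)$ falls into one of the three blocks of $P_{\mathcal{E}}(\mathcal{A})$ and that nothing in those three blocks is missed — in particular that the within-basis resultants ${\rm res}_{x_n}(f,\hat f)$ with $f,\hat f \in E_i$, which $P(\mathcal{R}(\Phi))$ produces from $E$, are indeed already present in $P(E_i) \subseteq P_{E_i}(A_i)$. Once the term-by-term correspondence is tabulated, equality follows. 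I would close by remarking that this gives the promised drop-in implementation: feeding $\mathcal{R}(\Phi)$ to any existing McCallum-projection CAD routine reproduces the reduced projection, provided no equational constraint is nullified.
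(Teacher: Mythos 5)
Your proposal is correct: the paper omits this proof as ``straightforward'', and your term-by-term unwinding is precisely the intended argument --- the core identity $P(E)=\bigcup_i P(E_i)\cup \mathrm{Res}^{\times}(\mathcal{E})$ already appears in the paper's proof of Theorem~\ref{DJW:theorem:SMcC3corrected}, and the remaining point, that the resultants $\mathrm{res}_{x_n}(f,g)$ are free of $x_n$ and hence pass through $P$ contributing no new coefficients, discriminants or resultants, is exactly the convention the authors rely on. You handle both points, so nothing essential is missing.
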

The proof is straightforward and so omitted here.  

\begin{corollary}\label{corollary:rescad}
If no $f_i$ is nullified by a point in $\R{}^{n-1}$ then inputting $\mathcal{R}(\Phi)$ into any algorithm which produces a sign-invariant CAD using McCallum's projection operator, will result in the TTICAD for $\Phi$ produced by Algorithm \ref{algorithm:TTICADalgorithm}.
\end{corollary}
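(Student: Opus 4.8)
The plan is to deduce Corollary \ref{corollary:rescad} from Theorem \ref{THEOREM:PROJEQUALSRES} together with the correctness of Algorithm \ref{algorithm:TTICADalgorithm} (the preceding theorem), by checking that under the stated hypothesis the two constructions produce identical projection polynomials and identical lifting sets at every stage. First I would observe that, since no $f_i$ is nullified by any point of $\R^{n-1}$, the input $\Phi$ is automatically well oriented provided $\mathfrak{P}$ is (and if $\mathfrak{P}$ is not well oriented, then $\mathcal{R}(\Phi)$ fed to a McCallum-based algorithm fails in exactly the same way, since $P(\mathcal{R}(\Phi)) = P_{\mathcal{E}}(\mathcal{A}) = \mathfrak{P}$ up to the usual trivial simplifications and contents — more on this below). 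So we may assume well-orientedness throughout and concentrate on the successful branch.

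Next I would compare the first projection step. Algorithm \ref{algorithm:TTICADalgorithm} forms $\mathfrak{P} = C \cup P_{\mathcal{F}}(\mathcal{B})$ and calls \texttt{CADW}$(n-1,\mathfrak{P})$; a sign-invariant CAD algorithm built on McCallum's $P$, when handed $\mathcal{R}(\Phi)$, first computes $P(\mathcal{R}(\Phi))$. By Theorem \ref{THEOREM:PROJEQUALSRES} these agree (the contents $C$ and the passage to squarefree bases $\mathcal{B},\mathcal{F}$ being precisely the ``irreducible basis'' bookkeeping that any correct implementation of $P$ performs on its input, and that Theorem \ref{THEOREM:PROJEQUALSRES} is stated modulo). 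Hence the CAD $\mathcal{D}'$ of $\R^{n-1}$ and all lower-dimensional data are identical in the two runs. The only remaining point is the final lifting step from $\R^{n-1}$ to $\R^n$: Algorithm \ref{algorithm:TTICADalgorithm} lifts over each cell $c$ using $L_c$, which under our hypothesis is $\bigcup_i F_i = F$ for every positive-dimensional $c$ (the nullification branch is never taken) and is again built from the $F_i$ (via the $B_i$ only when $\dim c = 0$, but with no nullification the $B_i$ are not needed — I should double-check the $\dim c = 0$ case and note that there the stack is determined by roots of $L_c$, and one verifies $L_c$ yields the same sections). A McCallum-based sign-invariant algorithm lifts over $c$ using the full set $\mathcal{R}(\Phi)$ of $n$-variate polynomials, namely $E \cup \bigcup_i\{\mathrm{res}_{x_n}(f,g)\}$; but the resultants have main variable below $x_n$ and so contribute no sections in the $x_n$-direction, leaving exactly the sections of the elements of $E$, which (after taking squarefree bases) are the sections of $F$. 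Thus the stacks coincide cell by cell, and the two CADs are equal.

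The main obstacle is making the ``up to trivial simplifications'' matching rigorous: one must argue carefully that the contents, the square-free/irreducible-basis reductions, and the removal of constants that the algorithm performs explicitly are exactly the operations subsumed in the statement $P(\mathcal{R}(\Phi)) = P_{\mathcal{E}}(\mathcal{A})$ of Theorem \ref{THEOREM:PROJEQUALSRES}, so that feeding $\mathcal{R}(\Phi)$ to a generic McCallum-based routine and running Algorithm \ref{algorithm:TTICADalgorithm} genuinely produce the same projection set and not merely sets with the same real varieties. A secondary point to handle with care is the lifting set: one must confirm that the extra polynomials in $\mathcal{R}(\Phi)$ that are \emph{not} in $E$ — the resultants $\mathrm{res}_{x_n}(f,g)$ — are inert during the last lift because their main variables are strictly less than $x_n$, and hence the stack a sign-invariant algorithm builds over $c$ agrees with the stack built from $L_c$. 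Once these two bookkeeping checks are in place, the corollary follows immediately by invoking the correctness theorem for Algorithm \ref{algorithm:TTICADalgorithm} on both sides.
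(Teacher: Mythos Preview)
Your proposal is correct and follows precisely the line the paper intends: the paper gives no explicit proof of the corollary, treating it as an immediate consequence of Theorem \ref{THEOREM:PROJEQUALSRES} (whose own proof it omits as ``straightforward''). You have simply written out the details the paper suppresses --- matching projection sets via Theorem \ref{THEOREM:PROJEQUALSRES}, invoking the correctness of Algorithm \ref{algorithm:TTICADalgorithm}, and checking that the resultants in $\mathcal{R}(\Phi)$ are inert at the final lift because they lack $x_n$ --- so there is nothing to add.
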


Hence Corollary \ref{corollary:rescad} gives us a simple way to compute TTICADs using existing CAD implementations, such as {\sc Qepcad}, but this cannot be applied as widely as Algorithm \ref{algorithm:TTICADalgorithm}.

\subsection{Implementation in Maple} 
\label{MapleSec}

There are various implementations of CAD available but none guarantee order-invariance, required for proving the validity of our TTICAD algorithm.  
Hence we needed to construct our own implementation to obtain experimental results.  We built an implementation of McCallum projection, so that we could reproduce {\tt CADW} and modified the existing stack generation commands in \textsc{Maple} from \cite{Chenetal2009d} so they could be used more widely.  Together these allowed us to fully implement Algorithm \ref{algorithm:TTICADalgorithm}.  The CAD implementation grew to a \textsc{Maple} package \texttt{ProjectionCAD} which gathers together algorithms for producing CADs via projection and lifting to complement the existing CAD commands in \textsc{Maple} which use triangular decomposition, giving the same representation of sample points using regular chains.  For further details (along with free access to the code) see \cite{ProjectionCAD}.


\subsection{Formulating a Problem for TTICAD} 
\label{subsection:heuristic}

When formulating a problem for TTICAD there may be choices for the input, such as choosing which equational constraint to designate in a QFF when more than one is present.  Other possibilities include choosing whether conjunctions of formulae should be split into separate QFFs.  Usually it will be preferable to minimise the number of QFFs, but if for example a designated equational constraint has many intersections with another polynomial which could be ignored by using separate QFFs, then the cost of the extra polynomials in the projection set may be outweighed by the complexity of those removed.  
Hence it is worth taking care in how we formulate the TTICAD.  
A simple problem of the form
\[
f_1 = 0 \land f_2 = 0 \land g_1 < 0 \land g_2 < 0
\]
has six acceptable choices for the composition of $\Phi$.

We have started exploring heuristics for choosing the best composition. The metric  {\tt sotd} (sum of total degrees) as defined in \cite{Dolzmannetal2004a} may be used to approximate the complexity of polynomials. 
We first considered using \texttt{sotd}$(\mathfrak{P})$ and found that while it was fairly well correlated with the number of cells produced by Algorithm \ref{algorithm:TTICADalgorithm} it was not always fine enough to separate compositions leading to TTICADs with significantly different numbers of cells.  Hence we prefer a stronger heuristic, \texttt{sotd}$(\mathfrak{P} \cup \overline{P}(\mathfrak{P}) ))$ where $\overline{P}$ is the complete set of projection polynomials obtained by repeatedly applying $P$.  

For the problems in Section \ref{sec:Experiment} we used the QFFs imposed by the disjunctions of formulae using this heuristic to choose which equational constraints are designated when there was a choice. For these problems the heuristic computation time was negligible compared to the overall time, but for larger problems this would not be the case.  Work on heuristics is ongoing with a more detailed report available in \cite{BDEW13}. 



\section{Experimental Results}
\label{sec:Experiment}

\subsection{Description of experiments}

Our timings were obtained on a Linux desktop (3.1GHz Intel processor, 8.0Gb total memory) with {\sc Maple} 16 (command line interface), {\sc Mathematica} 9 (graphical interface) and {\sc Qepcad-B} 1.69.  
For each experiment we produce a CAD and give the time taken and number of cells (cell count). The first is an obvious metric while the second is crucial for applications performing operations on each cell.  

For {\sc Qepcad} the options {\tt +N500000000} and {\tt +L200000} were provided, the initialization included in the timings and explicit equational constraints declared when present with the product of those from the individual QFFs declared otherwise.  
In {\sc Mathematica} the output is not a CAD but a formula constructed from one \cite{Strzebonski10}, with the actual CAD not available to the user.  Cell counts for the algorithms were provided by the author of the {\sc Mathematica} code. 

TTICADs are calculated using our implementation described in Section \ref{MapleSec}, which is simple and not optimized.  The results in this section are not presented to claim that our implementation is state of the art, but to demonstrate the power of the TTICAD theory over the the conventional theory, and how it can allow even a simple implementation to compete.  Hence the cell counts are of most interest.

The time is measured to the nearest tenth of a second, with a time out (T/O) set at $5000$ seconds.   When {\bf F} occurs it indicates failure due to a theoretical reason such as not well-oriented (in either sense).  The occurrence  of Err indicates an error in 
an internal subroutine of {\sc Maple}'s \texttt{RegularChains} package, used by \texttt{ProjectionCAD}.  This error is not theoretical but a bug, 
beyond our control. 

We considered examples originating from \cite{BuchbergerHong1991}.  However these problems (and most others in the literature) involve conjunctions of conditions, chosen as such to make them amenable to existing technologies.
These problems can be tackled using TTICAD, but they do not demonstrate its full strength. Hence we introduced new examples, denoted with a $\dagger$, which are adapted from \cite{BuchbergerHong1991} to have disjuncted QFFs.

Two examples came from the application of branch cut analysis for simplification.  These problems require a decomposition according to branch cuts of the form $f = 0 \land g < 0$, and then go on to test the validity of a simplification on each cell, \cite[etc.]{Phisanbutetal2010a}.  We need to consider the disjunction of the branch cuts making such problems suitable for Algorithm \ref{algorithm:TTICADalgorithm}.  We included a key example from Kahan \cite{Kahan1987b}, along with the problem induced by considering the validity of the double angle formulae for arcsin.  Finally we considered the worked example from Section \ref{workedexample} and its generalisation to three dimensions.
Note that A and B following the problem name indicate different variable orderings.  Full details for all examples can all be found in the CAD repository \cite{Wilsonetal2012b}
 available freely online at \texttt{http://opus.bath.ac.uk/29503}.
 
\subsection{Results}

We present our results in Table \ref{table:CADW:TTICAD}.  For each problem we give the name used in the repository, $n$ the number of variables, $d$ the maximum degree of polynomials involved and $t$ the number of QFFs used for TTICAD.  We then give the time taken and number of cells produced by each algorithm.  

We first compare our TTICAD implementation with the sign-invariant CAD generated using \texttt{ProjectionCAD} with McCallum's projection operator \cite{ProjectionCAD}.  Since these use the same architecture the comparison makes clear the benefits of the TTICAD theory.  The experiments confirm the fact that  
the cell count for TTICAD will always be less than or equal to that of a sign-invariant CAD produced using the same implementation.  Ellipse$\dagger$ A is not well-oriented in the sense of \cite{McCallum1998}, and so both methods return {\bf FAIL}.  Solotareff$\dagger$ A and B are well-oriented in this sense but not in the stronger sense of Definition \ref{def:tticadwellorientedness} and hence TTICAD fails while the full sign-invariant CADs can be produced.  The only example with equal cell counts is Collision$\dagger$ A in which the non-equational constraints were so simple that the projection polynomials were unchanged.  Examining the results for the worked example and its generalisation we start to see the true power of TTICAD. In 3D Example A we see a 759-fold reduction in time and a 50-fold reduction in cell count. 

\begin{sidewaystable*}
\caption{Comparing TTICAD to the full CAD built with the same architecture and other CAD algorithms.} 
\begin{center}
\label{table:CADW:TTICAD}
\begin{tabular}{lccc|rr|rr|rr|rr|rr}
  \multicolumn{4}{c|}{Problem} 
& \multicolumn{2}{|c|}{Full-CAD} & \multicolumn{2}{|c|}{TTICAD} 
& \multicolumn{2}{|c|}{{\sc Qepcad}} & \multicolumn{2}{|c|}{{\sc Maple}}  
& \multicolumn{2}{|c}{{\sc Mathematica}} \\
Name & n & d & t
                        &  Time      &  Cells      &  Time      &  Cells       
                        &  Time      &  Cells      &  Time      &  Cells       
                        &  Time      &  Cells \\
\hline
Intersection A          & 3 & 2 & 1
						& 360.1     &  3707       &  1.7       &  269         
						& 4.5       &  825        &  ---       &  Err     
						& 0.0 & 3   \\
Intersection B          & 3 & 2 & 1
						& 332.2     &  2985       &  1.5       &  303         
						& 4.5       &  803        &  50.2      &  2795    
						& 0.0 & 3   \\
Random A                & 3 & 3 & 1
						& 268.5     &  2093       &  4.5       &  435         
						& 4.6       &  1667       &  23.0      &  1267    
						& 0.1 & 657 \\
Random B                & 3 & 3 & 1
						& 442.7     &  4097       &  8.1       &  711         
						& 5.4       &  2857       &  48.1      &  1517    
						& 0.0 & 191 \\ 
Intersection$\dagger$ A & 3 & 2 & 2
						&  360.1     &  3707       &  68.7      &  575         
						&  4.8       &  3723       &  ---       &  Err     
						& 0.1 & 601 \\
Intersection$\dagger$ B & 3 & 2 & 2
						&  332.2     &  2985       &  70.0      &  601         
						&  4.7       &  3001       &  50.2      &  2795    
						& 0.1 & 549 \\
Random$\dagger$ A       & 3 & 3 & 2
						&  268.5     &  2093       &  223.4     &  663         
						&  4.6       &  2101       &  23.0      &  1267    
						& 0.2 & 808 \\
Random$\dagger$ B       & 3 & 3 & 2
						&  442.7     &  4097       &  268.4     &  1075        
						&  142.4     &  4105       &  48.1      &  1517    
						& 0.2 & 1156 \\
Ellipse$\dagger$ A      & 5 & 4 & 2
						&  ---       &  {\bf F}    &  ---       &  {\bf F}     
						&  291.6     &  500609     &  1940.1    &  81193   
						& 11.2 & 80111 \\
Ellipse$\dagger$ B      & 5 & 4 & 2
						&  T/O       &  ---        &  T/O       &  ---         
						&  T/O       &  ---        &  T/O       &  ---     
						& 2911.2 & 16603131 \\
Solotareff$\dagger$ A   & 4 & 3 & 2
						&  677.6     &  54037      &  46.1      &  {\bf F}    
						&  4.9       &  20307      &  1014.2    &  54037   
						& 0.1 & 260 \\   
Solotareff$\dagger$ B   & 4 & 3 & 2
						&  2009.2    &  154527     &  123.8     &  {\bf F}     
						&  6.3       &  87469      &  2951.6    &  154527  
						& 0.1 & 762 \\
Collision$\dagger$ A    & 4 & 4 & 2
						&  264.6     &  8387       &  267.7     &  8387        
						&  5.0       &  7813       &  376.4     &  7895    
						& 3.6 & 7171 \\ 
Collision$\dagger$ B    & 4 & 4 & 2
						&  ---       &  Err        &  ---         &  Err       
						& T/O        &  ---        &  T/O       &  ---     
						& 591.5 & 1234601 \\
Kahan A                 & 2 & 4 & 7
						&  10.7      &  409        &  0.3       &  55          
						&  4.8       &  261        &  15.2      &  409     
						& 0.0 & 72 \\       
Kahan B                 & 2 & 4 & 7
						&  87.9      &  1143       &  0.3       &  39        
						&  4.8       &  1143       &  154.9     &  1143    
						& 0.1 & 278 \\
Arcsin A                & 2 & 4 & 4
						&  2.5       &  225        &  0.3       &  57          
						&  4.6       &  225        &  3.3       &  225     
						& 0.0 & 175 \\       
Arcsin B                & 2 & 4 & 4
						&  6.5       &  393        &  0.2       &  25          
						&  4.5       &  393        &  7.8       &  393     
						& 0.0 & 79 \\
2D Example A            & 2 & 2 & 2
						&  5.7       &  317        &  1.2       &  105         
						&  4.7       &  249        &  6.3       &  317     
						& 0.0 & 24 \\
2D Example B            & 2 & 2 & 2
						&  6.1       &  377        &  1.5       &  153         
						&  4.5       &  329        &  7.2       &  377     
						& 0.0 & 175 \\
3D Example A            & 3 & 3 & 2
						&  3795.8    &  5453       &  5.0       &  109         
						&  5.3       &  739        &  ---       &  Err     
						& 0.1 & 44 \\
3D Example B            & 3 & 3 & 2
						&  3404.7    &  6413       &  5.8       &  153         
						&  5.7       &  1009       &  ---       &  Err     
						& 0.1 & 135 \\
\end{tabular}
\end{center}
\vskip-20pt
\end{sidewaystable*}

We next compare our implementation of TTICAD with the state of the art in CAD: {\sc Qepcad} \cite{Brown03}, {\sc Maple} \cite{Chenetal2009d} and {\sc Mathematica} \cite{Strzebonski06, Strzebonski10}.  {\sc Mathematica} is the quickest, however TTICAD often produces fewer cells.  We note that {\sc Mathematica}'s algorithm uses powerful heuristics and so actually used Gr\"obner bases on the first two problems, causing the cell counts to be so low.  When all implementations succeed TTICAD usually produces far fewer cells than {\sc Qepcad} or {\sc Maple}, especially impressive given {\sc Qepcad} is producing partial CADs for the quantified problems, while TTICAD is only working with the polynomials involved.  
For Collision$\dagger$ A the TTICAD theory offers no benefit allowing the better optimized alternatives to have a lower cell count.
 
Reasons for the TTICAD implementation struggling to compete on speed in general are that the {\sc Mathematica} and {\sc Qepcad} algorithms are largely implemented directly in {\tt C}, have had far more optimization, and in the case of {\sc Mathematica} use validated numerics for lifting \cite{Strzebonski06}.  However, the strong performance in cell counts is very encouraging, both due its importance for applications where CAD is part of a wider algorithm (such as branch cut analysis) and for the potential if TTICAD theory were implemented elsewhere.

\section{Conclusions}
\label{sec:Conc}

We have defined Truth Table Invariant CADs, which can be more closely aligned to the needs of problems than traditional sign-invariant CADs.  Theorem \ref{DJW:theorem:SMcC3corrected} extended the theory of equational constraints allowing us to develop Algorithm \ref{algorithm:TTICADalgorithm} to construct TTICADs efficiently for a large range of problems.  The algorithm has been implemented in {\sc Maple} giving promising experimental results.  TTICADs in general have less cells than full sign-invariant CADs using the same implementation and we showed that this allows even a simple implementation of TTICAD to compete with the state of the art CAD implementations.  
It is anticipated that future implementations of TTICAD could be far better optimized leading to lower times for the same cell counts.  We also note that the benefits of TTICAD increase with the number of QFFs in a problem and so larger problems may be susceptible to TTICAD when other approaches fail.
 
We hope that these results inspire other implementations of TTICAD, with Corollary \ref{corollary:rescad} showing a particularly easy way to adapt existing CAD implementations.    

\subsection{Future Work}

There is scope for optimizing the algorithm and extending it to allow less restrictive input.  Lemma \ref{lemma:constpolys} gives one extension that is included in our implementation while other possibilities include removing some of the caution implied by well-orientedness, analogous to \cite{Brown2005}.  Also, work developing heuristics for composing the input is underway in \cite{BDEW13}.  

Of course, the implementation of TTICAD used here could be improved in many ways, but perhaps more desirable would be for TTICAD to be incorporated into existing state of the art CAD implementations.  In particular, we would like to use the existing {\sc Maple} CAD commands \cite{Chenetal2009d} but this requires first understanding when they give order-invariance, a key question currently under consideration.  
We see several possibilities for the theoretical development of TTICAD:
\begin{itemize}[itemsep=-4.5pt,topsep=-10pt]
\item Can we apply the theory recursively instead of only at the top level?  For example by widening the projection operator to 
conclude order-invariance, as in \cite{McCallum2001}.  
\item Can we apply TTICAD to forms of QFF other than ``one equality and other items''?
For example, can we generalise the theory of bi-equational constraints?
\item Can we make use of the ideas behind partial CAD to avoid unnecessary lifting once the truth value of a QFF on a cell is determined? 
\item Can anything be done when $\Phi$ is not well oriented?
\item Can we implement the lifting algorithm in parallel?
\end{itemize}

\subsection*{Acknowledgements}
We are grateful to A.~Strzebo\'nski for assistance in performing the Mathematica tests and to the anonymous referees for useful comments.  We also thank the rest of the Triangular Sets seminar at Bath (A.~Locatelli, G.~Sankaran and N.~Vorobjov) for their input, and the team at Western University (C.~Chen, M.~Moreno Maza, R.~Xiao and Y.~Xie) for access to their {\sc Maple} code and helpful discussions. This work was supported by the EPSRC grant: EP/J003247/1.

\bibliography{jhd-edit}

\end{document}